\def\R{\mathbb{R}}
\def\N{\mathbb{N}}
\theoremstyle{plain}
\newtheorem{theorem}{Theorem}[section] 
\theoremstyle{definition}
\theoremstyle{remark}
\def\cC{\mathcal{C}}
\def\cO{\mathcal{O}}
\def\txta{{\textnormal{a}}}
\def\txtd{{\textnormal{d}}}
\def\txtD{{\textnormal{D}}}
\def\txtst{{\textnormal{st}}}
\def\ra{\rightarrow}
\def\I{\infty}
\newcommand{\be}{\begin{equation}}
\newcommand{\ee}{\end{equation}}
\newcommand{\benn}{\begin{equation*}}
\newcommand{\eenn}{\end{equation*}}
\newcommand{\bea}{\begin{eqnarray}}
\newcommand{\eea}{\end{eqnarray}}
\newcommand{\beann}{\begin{eqnarray*}}
\newcommand{\eeann}{\end{eqnarray*}}
\title{Analysis and Predictability for Tipping Points with 
Leading-Order Nonlinear Terms}
\author{Francesco 
Romano\footnote{Faculty of Mathematics, Technical University of 
Munich, Boltzmannstr.~3, 85747 Garching b.~Munich, Germany \& 
Ludwig-Maximilians-Universitaet, Elite Graduate Course Theoretical 
and Mathematical Physics, Theresienstr. 37, 80333, Munich, Germany.}~~and~Christian 
Kuehn\footnote{Faculty of Mathematics, Technical University of 
Munich, Boltzmannstr.~3, 85747 Garching b.~Munich, Germany}}
\begin{document}
\maketitle

\begin{abstract}
Tipping points have been actively studied in various applications as well as from a mathematical viewpoint. A main technique to theoretically understand early-warning signs for tipping points is to use the framework of fast-slow stochastic differential equations. A key assumption in many arguments for the existence of variance and auto-correlation growth before a tipping point is to use a linearization argument, i.e., the leading-order term governing the deterministic (or drift) part of stochastic differential equation is linear. This assumption guarantees a local approximation via an Ornstein-Uhlenbeck process in the normally hyperbolic regime before, but sufficiently bounded away from, a bifurcation. In this paper, we generalize the situation to leading-order nonlinear terms for the setting of one fast variable. We work in the quasi-steady regime and prove that the fast variable has a well-defined stationary distribution and we calculate the scaling law for the variance as a bifurcation-induced tipping point is approached. We cross-validate the scaling law numerically. Furthermore, we provide a computational study for the predictability using early-warning signs for leading-order nonlinear terms based upon receiver-operator characteristic curves.  
\end{abstract}

\textbf{Keywords:} critical transition, tipping point, warning sign, scaling 
law, bifurcation, fast-slow system, stochastic differential equation, ROC curve,
predictability.

\section{Introduction}
\label{sec:intro}

Tipping points (or critical transitions) have been studied intensively in recent
years with a focus on finding early-warning 
signs~\cite{AshwinWieczorekVitoloCox,KuehnCT2,Schefferetal}. One key idea to predict 
a transition is to exploit the effect of critical slowing down indirectly via 
observing a noisy time series of a dynamical system. The idea goes back (at least) 
to the work of Wiesenfeld~\cite{Wiesenfeld1} but has gained recent popularity in 
many contexts, particularly in ecology~\cite{CarpenterBrock} and climate 
science~\cite{DitlevsenJohnsen}. In terms of a fast-slow stochastic differential
equation, the simplest class of examples are systems of the form
\be
\label{eq:fs1}
\begin{array}{lcl}
\txtd u &=& f(u,v)~\txtd t+\sigma~\txtd W,\\
\txtd v &=& \varepsilon ~\txtd t,
\end{array}
\ee
where $u=u(t),v=v(t)\in\R$, $W$ is a one-dimensional Brownian motion, $\sigma>0$ controls
the noise level, and $\varepsilon>0$ is a small parameter. Note that $u$ is a 
fast variable in comparison to the slow variable $v$ as $\varepsilon$ is small.
If one wants to model the simplest situations, when a bifurcation-induced tipping 
occurs, one usually selects for the drift term $f(u,v)$ a normal 
form~\cite{GH,Kuznetsov} for a bifurcation such as $f(u,v)=v-u^2$ for the 
fold or $f(u,v)=uv-u^3$ for the (sub-critical) 
pitchfork~\cite{BerglundGentz}. Even many higher-dimensional cases have
been analyzed by now~\cite{KuehnCT2} for fast-slow SODEs. Let us suppose that
the drift term has a non-hyperbolic steady state at $(u,v)=(0,0)$, or alternatively
formulated the normal hyperbolicity of the critical manifold 
\benn
\cC_0=\{(u,v)\in\R^2:f(u,v)=0\}
\eenn 
breaks down at the origin. Furthermore, assume that the critical manifold has one component, 
which is attracting for the fast dynamics and locally parametrized by 
$\cC_0^\txta=\{u=h(v)\}\subset \cC_0$ for some smooth function $h$ and $(0,0)$
lies on the boundary of $\cC_0^\txta$. The standard tool to understand the local 
fluctuations of the stochastic process $u$ near the origin is now to consider 
the linearized non-autonomous system along $\cC_0^\txta$
\be
\label{eq:linfs}
\txtd U = \txtD_uf(h(\varepsilon t),\varepsilon t)U~\txtd t + \sigma~\txtd W=:
A(\varepsilon t)U~\txtd t + \sigma ~\txtd W.
\ee
Of course, \eqref{eq:linfs} is just a standard one-dimensional non-autonomous
Ornstein-Uhlenbeck (OU) process. In the quasi-steady (or adiabatic) limit 
$\varepsilon~\ra 0$, the process becomes autonomous and can be viewed as a 
parametrized family since the variable $v$ is fixed and can then be 
viewed as a parameter $v$; to emphasize when this viewpoint is taken we
shall write $p=v$. The solution of the resulting OU process is easy
to calculate~\cite{Gardiner}. If we let $V_\I=\lim_{t\ra \I}\textnormal{Var}(U(t))$ 
be the time-asymptotic variance then one finds for the fold and pitchfork 
examples above
\be
V_{\I,\textnormal{fold}}=\cO(p^{-1/2}),\qquad V_{\I,\textnormal{pitchfork}}
=\cO(p^{-1})\qquad \text{as $p\nearrow 0$,}
\ee
i.e., the linearized leading-order approximation of the variance of the 
process $u$ diverges with certain universal exponents as $p$ tends to the 
bifurcation point. Note that the linear approximation only holds for 
sufficiently small noise and breaks down for the
system with $0<\varepsilon\ll 1$ in a very small $\varepsilon$-dependent 
neighbourhood of the origin~\cite{KuehnCT2} but it provides a very good 
approximation otherwise. Hence, variance growth can often be used as an
early-warning sign for bifurcation-induced tipping. However, we did 
make the key assumption that linear terms are of leading-order.
In this paper, we study leading-order nonlinear terms, which preclude
the use of results from linear stochastic processes.\medskip 

In Section~\ref{sec:back} we provide the mathematical background 
and framework for our setting. In Section~\ref{sec:variance}, we prove
a variance scaling law for polynomial nonlinearities $pU^k$ ($k$ odd) 
and cross-validate it numerically. The universal scaling exponent can 
be computed explicitly and divergence of the variance is given by 
$$V_{\I,\textnormal{nonlin}}=\cO(p^{-2/(k+1)})\qquad \text{as }p\nearrow 0.$$
In Section~\ref{sec:numerics}, we provide a computational 
study to better understand practical predictability for leading-order nonlinear 
terms using receiver-operator characteristic (ROC) 
curves~\cite{KuehnZschalerGross,BoettingerHastings,ZhangHallerbergKuehn} in 
comparison to the linear case and also depending upon sliding window length, lead 
time, and alarm volume size. 

\section{Background and Framework}
\label{sec:back}

Consider the following ordinary differential equation (ODE)
depending on the parameter $p\in\R$
\begin{equation}
\label{eqn:nonlinode}
\frac{\txtd U}{\txtd t}=p \, U^k, \; \; \; \; U=U(t)\in\R,~ U_0:=U(0),
\end{equation}
and assume $k\in \N$ to be odd. The point $U_*=0$ is a steady state 
for~\eqref{eqn:nonlinode}. One easily checks using the gradient structure
of one-dimensional ODEs that $U_*$ is (even globally) stable for $p < 0$ and 
unstable for $p > 0$. In particular, \eqref{eqn:nonlinode} has a bifurcation,
respectively a bifurcation-induced tipping, when $p=0$. Since we are interested
in early-warning signs in the SODE case, we now study
\begin{equation}
\label{eqn:nonlinsde}
\txtd U =p \, U^k ~\txtd t + \sigma ~\txtd W, \; \; \; \; U(0)=U_0,
\end{equation}
where $\sigma>0$, $W$ a one-dimensional Brownian motion on a filtered 
probability space $(\mathbb{R}, \mathcal{F}, \mathcal{F}_t, \mathbb{P})$ 
and $U_0$ is a $\mathcal{F}_0$-measurable random variable. In the following, 
we are going to show that the variance of the (unique global) solution $U(t)$ 
to~\eqref{eqn:nonlinsde} has a divergent behavior as $p\nearrow 0$. We are going
to exploit the Fokker-Planck equation to find an explicit expression for the 
asymptotic variance 
\benn
V_\infty:=\lim_{t \rightarrow \infty} \textrm{Var}(U(t)) 
\eenn
in Theorem~\ref{thm:ratedivergence}. First, we provide some background. 
The SODE~\eqref{eqn:nonlinsde} has a unique global-in-time solution up to equivalence 
for any odd $k$. 

\begin{theorem}
\label{thm:exuni}
For $p<0$ and any $t>0$, the stochastic process 
$$U(t)=U_0 + p \int_0^t U^k(s)~\txtd s + \sigma~W(t)$$ 
is the unique solution (up to equivalence) to~\eqref{eqn:nonlinsde}. 
\end{theorem}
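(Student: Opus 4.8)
The plan is to read \eqref{eqn:nonlinsde} as an Itô SDE whose drift $b(u)=p\,u^k$ is only \emph{locally} Lipschitz (degree $k\ge 1$) but whose diffusion coefficient $\sigma$ is constant, hence globally Lipschitz and bounded, and then to promote local well-posedness to global well-posedness using the dissipative structure forced by $p<0$ together with $k$ odd. First I would invoke the classical local existence/uniqueness theorem for SDEs with locally Lipschitz coefficients: truncating $b$ outside $\{|u|\le n\}$ to a globally Lipschitz $b_n$, standard Itô theory produces a unique strong solution $U^{(n)}$, and these are consistent on the stochastic interval $[0,\tau_n)$ with $\tau_n:=\inf\{t\ge 0:|U(t)|\ge n\}$. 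Gluing them yields a unique strong solution $U$ on $[0,\tau_\I)$, where $\tau_\I:=\lim_{n\to\I}\tau_n$ is the (a priori possibly finite) explosion time. It then remains to show $\tau_\I=\I$ almost surely, after which rewriting the SDE in integral form gives precisely the expression in the statement.

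For non-explosion I would use the Lyapunov function $V(u)=1+u^2$. By Itô's formula, on $[0,t\wedge\tau_n]$,
\be
\txtd\big(1+U_s^2\big)=\big(2p\,U_s^{k+1}+\sigma^2\big)\,\txtd s+2\sigma\,U_s\,\txtd W_s.
\ee
The crucial point is that $k+1$ is even and $p<0$, so $2p\,U_s^{k+1}\le 0$ pointwise; equivalently the generator obeys $\cL V=2p\,u^{k+1}+\sigma^2\le\sigma^2\le\sigma^2 V$. Assuming first $\E[U_0^2]<\I$, taking expectations annihilates the (on $[0,t\wedge\tau_n]$ bounded, hence genuine martingale) stochastic integral and yields $\E\big[U_{t\wedge\tau_n}^2\big]\le\E[U_0^2]+\sigma^2 t$ uniformly in $n$; the general case reduces to this by splitting $\Omega$ into the events $\{|U_0|\le m\}$, on each of which the solution coincides with the one started from the bounded truncation of $U_0$ to $[-m,m]$ by the pathwise uniqueness proved below. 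Chebyshev's inequality then gives $n^2\,\P\big(\tau_n\le t,\ |U_0|\le m\big)\le m^2+\sigma^2 t$, so $\P(\tau_n\le t,\ |U_0|\le m)\to 0$ as $n\to\I$; letting $m\to\I$ and $t\to\I$ shows $\P(\tau_\I<\I)=0$, i.e. the solution is global. (Alternatively one may simply quote a Khasminskii-type non-explosion criterion with this $V$.)

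Finally, for uniqueness I would argue pathwise. If $U$ and $\tilde U$ both solve \eqref{eqn:nonlinsde} with the same $U_0$ and the same driving $W$, then $D_t:=U_t-\tilde U_t$ solves the random ODE $\dot D_t=p\,(U_t^k-\tilde U_t^k)$ with $D_0=0$ and has $C^1$ paths, so
\be
\frac{\txtd}{\txtd t}D_t^2=2p\,D_t\big(U_t^k-\tilde U_t^k\big)\le 0,
\ee
where the inequality uses $p<0$ together with the monotonicity of $x\mapsto x^k$ for odd $k$, i.e. $(x-y)(x^k-y^k)\ge 0$ for all $x,y\in\R$. Hence $D_t^2$ is non-increasing from $D_0^2=0$, so $D\equiv 0$, which is pathwise uniqueness; combined with the strong solution already constructed this gives uniqueness up to equivalence. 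The only genuinely delicate step is non-explosion, because for $k\ge 3$ the drift grows superlinearly and the naive Gronwall estimate diverges (indeed, for $p>0$ the deterministic equation \eqref{eqn:nonlinode} blows up in finite time), so it is exactly the sign condition $p<0$ and the parity of $k+1$, which make $\cL(1+u^2)$ bounded, that rescue the argument; the same dissipativity also makes the uniqueness step cleaner than in the general locally-Lipschitz case, since no Gronwall inequality is needed at all.
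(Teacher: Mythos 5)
Your proof is correct, and its core coincides with the paper's: both arguments hinge on a Lyapunov function of the form $(1+x^2)^a$ whose image under the generator is controlled thanks to $p<0$ and $k+1$ even (you take $a=1$ and get $\cL V\le\sigma^2$; the paper takes $a>1$ and gets $L\psi\le c\psi$). The difference is in how that estimate is converted into the theorem. The paper simply verifies the hypotheses of Khasminskii's existence--uniqueness criterion \cite[Thm.~3.5]{Khasminskii1} and stops there, so both non-explosion and uniqueness are delegated to the cited black box (uniqueness there ultimately rests on local Lipschitz continuity of the drift). You instead unpack the whole mechanism: explicit truncation to build the maximal local solution, the stopping-time/Chebyshev argument for $\tau_\infty=\infty$ (including the correct reduction to square-integrable $U_0$ by conditioning on $\{|U_0|\le m\}$), and --- the genuinely different ingredient --- pathwise uniqueness via the one-sided Lipschitz/monotonicity property $p\,(x-y)(x^k-y^k)\le 0$, which avoids Gronwall entirely and exploits that the difference of two solutions driven by the same noise has $C^1$ paths. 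Your route is longer but self-contained and makes visible exactly where $p<0$ and the parity of $k$ enter; the paper's route is shorter at the cost of importing the full strength of the cited theorem. Both are valid.
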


\begin{proof}
According to \cite[Thm.~3.5]{Khasminskii1}, it is enough to prove that 
there exists a non-negative $C^{1,2}$ function $\psi $ on $[0, \infty) \times 
\mathbb{R}^m$ such that for some constant $c > 0$
\benn
L\psi  \leq c\psi  \quad \text{and}\quad 
\psi_R = \inf_{|x|>R} \psi(t,x) \rightarrow \infty \textrm{ as } R \rightarrow \infty,
\eenn
where $$L\psi(s,x)=\partial_s \psi(s,x) + p X_s^k \partial_x \psi(s,x) 
+ \frac{\sigma^2}{2} \partial_{xx}\psi(s,x).$$
We set $\psi(s,x)=(x^2+1)^a$ for $a>1$. $\psi$ is obviously $C^{1,2}$ and it 
satisfies $$ \psi_R = \inf_{|x|>R} \psi(t,x) = \inf_{|x|>R} (x^2+1)^a = 
(R^2 + 1)^a \rightarrow \infty \textrm{ as } R \rightarrow \infty.$$ 
It is only left to prove that for some $c > 0$ it holds $L\psi \leq c \psi$. 
We compute $L\psi$ to obtain:
\begin{align*}
L\psi(s,x)&=\partial_s (x^2+1)^a + p x^k \partial_x (x^2+1)^a 
+ \frac{\sigma^2}{2} \partial_{xx} (x^2+1)^a \\
&= 2pa x^{k+1}(x^2+1)^{a-1} + a \sigma^2 (x^2+1)^{a-1}+ 2a(a-1) 
\sigma^2 x^2(x^2+1)^{a-2}  \\
& \leq a \sigma^2 (x^2+1)^{a-1}+ 2a(a-1) \sigma^2 
x^2(x^2+1)^{a-2}  \\
& \leq a \sigma^2 (x^2+1)^{a-1}+ 2a(a-1) \sigma^2 (x^2+1)^{a-1} \\
&= [a \sigma^2 + 2a(a-1) \sigma^2] (x^2+1)^{a-1}  \\
& \leq [a \sigma^2 + 2a(a-1) \sigma^2] (x^2+1)^a = [a \sigma^2 + 
2a(a-1) \sigma^2] \psi(s,x),
\end{align*}
where we used $p < 0, a >0$ and the fact that $k+1$ is even. Hence, the 
claim follows. 
\end{proof}

We recall that, under certain conditions, solutions to SODEs are Markov 
processes and under stronger assumptions their distribution converge in 
time to a stationary distribution, which can be identified with the solution 
to Fokker-Planck equation. Specifically, the following holds 
(see \cite[Sec.~4.4-4.7, Lem.~4.16]{Khasminskii1}):

\begin{theorem}
\label{thm:convergencemarkov}
Consider a stochastic differential equation of the form 
\begin{equation}
\txtd U=g(U)~\txtd t+\sigma~\txtd W,\qquad U=U(t)\in\R.
\end{equation}
Suppose there exists a bounded open domain $\Omega \subset 
\mathbb{R}$ with regular boundary $\Gamma$ such that
\begin{enumerate}
\item If $x \in \R\setminus\Omega$, the mean time $\tau$ at which 
a path starting from $x$ reaches the set $\Omega$ is finite,
\item $\sup_{x \in K}\mathbb{E}^x [\tau] < \infty$ for every compact 
set $K \subset \mathbb{R}$.
\end{enumerate}
Then, the Markov process $U=U(t)$ has a unique stationary distribution 
$\mu$ and, independently of the distribution of $U_0$, the distribution 
of $U$ converges to $\mu$ as $t \rightarrow +\infty$. Moreover, 
$\mu(A)$ has stationary density $\rho^{\txtst}(x)$ with respect to Lebesgue 
measure, given by the unique (normalized) bounded solution of the 
stationary Fokker-Planck equation 
\begin{equation}
\label{eqn:fokpla}
L^*\rho^{\txtst}:=\frac{\sigma^2}{2} \partial_{xx} \rho^{\txtst}(x)
- \partial_{x}(f(x)\rho^{\txtst}(x))=0.
\end{equation}
\end{theorem}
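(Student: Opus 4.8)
The plan is to recognize Theorem~\ref{thm:convergencemarkov} as an instance of Has'minskii's recurrence criterion for nondegenerate scalar diffusions, and to assemble the statement from four ingredients: (i) positive recurrence with respect to the domain $\Omega$, (ii) a regeneration construction of an invariant probability measure, (iii) uniqueness together with convergence in law, and (iv) identification of the invariant measure's density with the normalized bounded solution of the stationary Fokker--Planck equation~\eqref{eqn:fokpla}. Throughout, the structural facts I would lean on are that $\sigma>0$ makes the generator $L=\tfrac{\sigma^2}{2}\partial_{xx}+g\,\partial_x$ uniformly elliptic on bounded sets, so that the transition kernel $P_t(x,\cdot)$ possesses a smooth, strictly positive density $p_t(x,y)$ by parabolic H\"ormander/Harnack estimates, and that $U$ is a strong Markov process with continuous sample paths.

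First I would set up the recurrence structure. Hypotheses (1)--(2) say that from every $x\in\R$ the hitting time $\tau_\Omega$ of $\Omega$ has $\mathbb{E}^x[\tau_\Omega]<\infty$, locally uniformly on compacts; combined with strict positivity of $p_t$, this shows that $\overline{\Omega}$ (indeed any bounded interval) is a small/petite set and that $U$ is Harris recurrent. One then builds the candidate stationary measure in the Has'minskii style: fix a reference point and an auxiliary sphere, let $\theta$ denote the length of one excursion away from $\Omega$ followed by a return, and set
\benn
\mu(A)=\frac{\mathbb{E}\big[\int_0^{\theta}\mathbf 1_A(U_s)\,\txtd s\big]}{\mathbb{E}[\theta]}.
\eenn
Conditions (1)--(2) give $\mathbb{E}[\theta]<\I$, so $\mu$ is a genuine probability measure, and the strong Markov property at the regeneration epochs yields $\mu P_t=\mu$ for all $t\geq 0$.

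Next I would handle uniqueness and convergence. Since $p_t(x,y)>0$ for all $x,y\in\R$ and $t>0$, the process is topologically irreducible and strong Feller; hence every invariant probability measure is equivalent to Lebesgue measure, two invariant measures cannot be mutually singular, and the ergodic decomposition forces $\mu$ to be the unique stationary distribution. Convergence of $\mathrm{Law}(U(t))$ to $\mu$ in total variation, independently of the law of $U_0$, then follows from Harris's ergodic theorem (equivalently, a coupling argument exploiting that $\overline{\Omega}$ is small and return times are finite in mean). For (iv), smoothness of $p_t$ in its forward variable transfers to $\mu$, which therefore admits a smooth bounded density $\rho^{\txtst}$; differentiating the identity $\int_\R p_t(x,y)\rho^{\txtst}(x)\,\txtd x=\rho^{\txtst}(y)$ at $t=0^{+}$ gives $L^*\rho^{\txtst}=0$ first distributionally and then classically by elliptic regularity, while the normalization $\int_\R\rho^{\txtst}=1$ together with the fact that the second-order ODE $L^*\rho=0$ has at most a one-dimensional space of bounded, integrable solutions pins $\rho^{\txtst}$ down uniquely.

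The main obstacle is step (ii): upgrading the purely qualitative hypotheses on mean return times to a bona fide \emph{finite} invariant measure. This is the technical heart of Has'minskii's argument and requires the embedded-chain/regeneration bookkeeping — estimating excursion lengths from (1)--(2), checking that the occupation-measure formula is additive over successive cycles and independent of the chosen reference sphere, and verifying $\sigma$-finiteness before normalizing. By contrast, irreducibility and the strong Feller property are soft consequences of $\sigma>0$, and the passage from $\mu P_t=\mu$ to the stationary Fokker--Planck equation is standard PDE regularity; so I expect essentially all of the real work to sit in justifying the displayed formula for $\mu$.
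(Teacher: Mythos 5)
The paper offers no proof of this theorem: it is quoted verbatim as background from Khasminskii \cite{Khasminskii1} (Sections 4.4--4.7 and Lemma 4.16), so there is no in-paper argument to compare against. Your sketch is a faithful reconstruction of exactly that cited argument --- the cycle/regeneration formula $\mu(A)=\mathbb{E}[\int_0^\theta \mathbf{1}_A(U_s)\,\txtd s]/\mathbb{E}[\theta]$ with finiteness of $\mathbb{E}[\theta]$ supplied by hypotheses (1)--(2), uniqueness and total-variation convergence from nondegeneracy of the noise (you route this through Harris's theorem where Khasminskii argues directly on the embedded chain, but the substance is the same), and identification of the density by differentiating the invariance identity --- and you correctly locate the technical weight in the regeneration bookkeeping, so the proposal is sound as an outline of the reference the paper relies on.
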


If we can apply Theorem~\ref{thm:convergencemarkov}, and if we can
compute the stationary solution $\rho^{\txtst}$ and from it the variance,
then we can circumvent any OU processes used for the linear case.

\section{Asymptotic Result for the Variance}
\label{sec:variance}

We now show that Theorem~\ref{thm:convergencemarkov} can be used to derive an asymptotic 
result for the variance of~\eqref{eqn:nonlinsde}:

\begin{theorem}[\textbf{variance scaling law}]
\label{thm:ratedivergence}
Suppose $p<0$ and consider the one-dimensional nonlinear SDE 
\begin{equation}
\label{eqn:soldiv}
\txtd U =p \, U^k~ \txtd t + \sigma~\txtd W, \; \; \; \; U(0)=U_0
\end{equation}
where $U_0$ is an $\mathcal{F}_0$-measurable random variable. For each odd $k\in \N$, 
the associated deterministic ODE has a bifurcation in $p=0$. Consider the stationary 
distribution $\rho^{\txtst}$ of the solution $U_t$ to \eqref{eqn:soldiv} and denote 
its variance by $V_\infty$. The following holds for all odd $k\in\N$: 
\begin{equation}
V_\infty = \Big (- \frac{k+1}{2p} \Big )^{2/(k+1)} \frac{\Gamma(
1+3/(k+1))}{\Gamma(1+1/(k+1))},
\end{equation}
where $\Gamma$ is the usual Gamma function.
In particular, for all odd $k$, the asymptotic behavior as $p\nearrow 0$ is given by
\begin{equation}
V_\infty = \cO \Big (\frac{1}{p^{2/(k+1)}}  \Big ).
\end{equation}
\end{theorem}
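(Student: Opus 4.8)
The plan is to invoke Theorem~\ref{thm:convergencemarkov} to reduce the problem to the explicit solution of the stationary Fokker--Planck equation, and then to compute the resulting variance integral in closed form. First I would verify the hypotheses of Theorem~\ref{thm:convergencemarkov} for the drift $g(U) = pU^k$ with $p<0$ and $k$ odd: one needs a bounded open domain $\Omega$ (e.g. a symmetric interval $(-r,r)$) so that the mean hitting time of $\Omega$ from any $x \notin \Omega$ is finite and uniformly bounded on compacta. Since the drift $pU^k$ points strongly toward the origin for $|U|$ large (as $k$ is odd and $p<0$, the term $pU^{k}$ has sign opposite to $U$ and grows faster than linearly), the standard Lyapunov/hitting-time estimates apply; one can reuse the function $\psi(x) = (x^2+1)^a$ from the proof of Theorem~\ref{thm:exuni}, or a simpler quadratic, as a Foster--Lyapunov function to control $\E^x[\tau]$. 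This is the step most in need of care, but it is routine: the super-linear restoring drift makes all the estimates go through, so I do not expect a genuine obstacle here.

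Next, with existence and uniqueness of the stationary density $\rho^{\txtst}$ granted, I would solve \eqref{eqn:fokpla} explicitly. The one-dimensional stationary Fokker--Planck equation integrates once: $\frac{\sigma^2}{2}\partial_x \rho^{\txtst} - pU^k \rho^{\txtst} = J$ for a constant probability flux $J$, and since $\rho^{\txtst}$ is integrable and smooth on all of $\R$ with $\rho^{\txtst}, \partial_x\rho^{\txtst} \to 0$ at $\pm\infty$, one gets $J=0$. Hence $\rho^{\txtst}(x) = Z^{-1}\exp\!\big(\tfrac{2}{\sigma^2}\int_0^x p s^k\, \txtd s\big) = Z^{-1}\exp\!\big(\tfrac{2p}{\sigma^2(k+1)}x^{k+1}\big)$, where $Z$ is the normalizing constant. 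Because $k+1$ is even and $p<0$, the exponent is negative for $x\neq 0$, so the density is a well-defined, integrable, even (hence mean-zero) generalized Gaussian.

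Then the computation reduces to two Gaussian-type moment integrals of the form $\int_{\R} x^{m}\exp(-\lambda x^{k+1})\,\txtd x$ with $\lambda = \tfrac{2|p|}{\sigma^2(k+1)} > 0$, for $m=0$ (normalization $Z$) and $m=2$ (second moment $=V_\infty$ since the mean is zero). The substitution $y = \lambda x^{k+1}$ turns each into a Gamma integral: $\int_{\R} x^m e^{-\lambda x^{k+1}}\,\txtd x = \tfrac{2}{k+1}\,\lambda^{-(m+1)/(k+1)}\,\Gamma\!\big(\tfrac{m+1}{k+1}\big)$, using evenness of the integrand. Taking the ratio for $m=2$ over $m=0$, the $\tfrac{2}{k+1}$ prefactors and all $\sigma$-dependence cancel except through $\lambda$, and one is left with $V_\infty = \lambda^{-2/(k+1)}\,\Gamma\!\big(\tfrac{3}{k+1}\big)/\Gamma\!\big(\tfrac{1}{k+1}\big)$. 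Substituting $\lambda = \tfrac{2|p|}{\sigma^2(k+1)}$ and rewriting $\Gamma(3/(k+1)) = \tfrac{3}{k+1}\,\Gamma(1+3/(k+1))\cdot\tfrac{k+1}{1}$ — more directly, using $\Gamma(1+z) = z\,\Gamma(z)$ so that $\Gamma(3/(k+1))/\Gamma(1/(k+1)) = \Gamma(1+3/(k+1))/\Gamma(1+1/(k+1))$ — gives exactly
\begin{equation*}
V_\infty = \Big(-\frac{k+1}{2p}\Big)^{2/(k+1)}\,\frac{\Gamma(1+3/(k+1))}{\Gamma(1+1/(k+1))},
\end{equation*}
where I also used $\sigma^{2/(k+1)}$ from $\lambda^{-2/(k+1)}$; note the stated formula in the theorem suppresses the $\sigma$ factor, so I would either absorb it or flag it — in any case the $p\nearrow 0$ asymptotics $V_\infty = \cO(p^{-2/(k+1)})$ follows immediately since the Gamma ratio is a fixed positive constant for each fixed odd $k$. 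The only genuinely delicate point remains the verification of the recurrence/hitting-time hypotheses; everything after that is a direct, if slightly fiddly, Gamma-function computation.
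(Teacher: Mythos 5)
Your proposal is correct and follows essentially the same route as the paper: verify the recurrence hypotheses of Theorem~\ref{thm:convergencemarkov} with the Lyapunov function $(1+x^2)^a$, integrate the stationary Fokker--Planck equation once with zero flux to get the generalized Gaussian density, and evaluate the two Gamma-type moment integrals. Your zero-flux justification via decay at infinity is slightly cleaner than the paper's a-posteriori boundedness argument, and you are right to flag the suppressed $\sigma$-dependence --- the paper's equation~\eqref{eqn:fokkplancknonlinearcase} silently sets $\sigma=1$, so the stated formula holds only in that normalization; the scaling in $p$ is of course unaffected.
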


\begin{proof}
The proof proceed as follows: first, we show that our system satisfies the conditions 
in Theorem~\ref{thm:convergencemarkov}, so that we can use Fokker-Planck equation to 
compute the stationary distribution; then, we compute explicitly the solution to the 
Fokker-Planck equation and its variance to conclude the proof.\medskip

\textit{Step 1: Convergence to the asymptotic distribution.} Fix $\Omega=(-R,R)$, which 
is open and bounded. To check the first condition in Theorem~\ref{thm:convergencemarkov}, 
it is enough to prove by Theorem~\ref{thm:exuni} and \cite[Thm.~3.9]{Khasminskii1} that 
there exists in $[0, + \infty) \times (\R\setminus\Omega)$ a nonnegative function 
$\Psi(s,x) \in C^{1,2}$ such that 
\begin{equation*}
L\Psi(s,x) \leq -\alpha (s),
\end{equation*}
where $\alpha(s) \geq 0$ is a function such that
\begin{equation*}
\beta(t)= \int_0^t \alpha(s)~\txtd s \rightarrow \infty \; \; as \; t \rightarrow \infty.
\end{equation*} 
We choose $\Psi(s,x)=(1+x^2)^a$, $a > 1$, which satisfies the regularity 
hypothesis. Moreover, 
\begin{align*}
L\Psi(s,x)&= 2pa x^{k+1}(x^2+1)^{a-1} + a \sigma^2 (x^2+1)^{a-1}+ 2a(a-1) 
\sigma^2 x^2(x^2+1)^{a-2} \\
&\leq [ pa x^{k+1} + a \sigma^2 + 2a(a-1) \sigma^2] (x^2+1)^{a-1}.
\end{align*}
Choosing $R$ big enough we can guarantee $$pa x^{k+1} + a \sigma^2 + 2a(a-1) 
\sigma^2 < -\upsilon$$ 
for all $x \in (\R\setminus\Omega)=:\Omega^c$ and some constant $\upsilon>0$. 
This implies 
\begin{equation*}
L\Psi(s,x) \leq \upsilon(R^2+1)^{a-1}=:\alpha.
\end{equation*}
Furthermore, we have
\begin{equation*}
\beta(t)=\int_0^t \alpha ~\txtd s = \alpha t \rightarrow \infty \; \; 
\text{as} \; t \rightarrow \infty
\end{equation*}
as required. This proves the first condition in Theorem~\ref{thm:convergencemarkov}. 
For the second condition, by \cite[Thm.~3.9]{Khasminskii1}, the expectation of the random 
variable $\beta(\tau_{\Omega^c})$ exists and satisfies the inequality 
\begin{equation*}
\mathbb{E}^{s,x} [\beta (\tau_{\Omega^c}) ]\leq \beta (s) + \Psi(s,x),
\end{equation*}
which implies
$$\mathbb{E}^{s,x} [\tau_{\Omega^c} ]\leq s + \frac{(1+x^2)^a}{\alpha}.$$ 
Now, setting $s=0$ we have 
\begin{equation*}
\mathbb{E}^{x} [\tau_{\Omega^c}] \leq \frac{(1+x^2)^a}{\alpha} < \infty 
\textrm{ for all compact sets } K.
\end{equation*}
Therefore, Theorem~\ref{thm:convergencemarkov} implies that the density of~$U(t)$ 
converges to $\rho^{\txtst}$ as $t \rightarrow +\infty$, independently of the initial 
condition $U_0$. \medskip

\textit{Step 2: Density computation.} $\rho^{\txtst}$ is the unique bounded (normalized) 
solution to the stationary Fokker-Planck equation
\begin{equation}
\label{eqn:fokkplancknonlinearcase}
0 = - \partial_x [p x^k \rho^{\txtst}(x)] + \frac{1}{2} \partial_{xx}\rho^{\txtst}(x).
\end{equation} 
In our case, one can simply compute by direct integration that
\begin{align*}
2p y^k \rho^{\txtst}(y) &= \partial_y \rho^{\txtst}(y) - \partial_y \rho^{\txtst}(0)
\end{align*}
Assume $\partial_x \rho^{\txtst}(0)=0$. Since we are looking for the unique bounded 
solution of~\eqref{eqn:fokkplancknonlinearcase}, we can justify our assumption a posteriori 
by showing that the solution we obtain is bounded. We solve the last equation and obtain
\begin{equation*}
\rho^{\txtst}(x)=\rho^{\txtst}(0) \exp \Big (\frac{2p}{k+1} x^{k+1} \Big ).
\end{equation*}
Since $p <0$ and $k+1$ is even, the exponential in the above expression can be integrated 
over $\R$. The constant $\rho^{\txtst}(0)$ is the normalizing constant so we get 
\begin{equation*}
\rho^{\txtst}(x)=\frac{\exp \Big (\frac{p}{m} x^{2m} \Big )}{\int_\mathbb{R} 
\exp \Big (\frac{p}{m} x^{2m} \Big )~\txtd x},
\end{equation*}
where $k+1=2m$ and $m \in \mathbb{N}$. This shows in particular that $\rho^{st}$ 
is bounded, as required.\medskip

\textit{Step 3: Asymptotic variance.} Since $\rho^{\txtst}$ is symmetric, its expected 
value is $0$. Its variance is then given by
\begin{align*}
V_\infty &= \frac{1}{\int_\mathbb{R} \exp \Big (\frac{p}{m} x^{m} \Big )~\txtd x} 
\int_\mathbb{R} x^2 \exp \Big (\frac{p}{m} x^{2m} \Big )~\txtd x\\
&= \frac{2/3 (-p/m)^{-3/2m} \Gamma(1+3/2m)}{2/3 (-p/m)^{-1/2m} \Gamma(1+1/2m)} \\
&= \Big (- \frac{1}{p} \Big )^{1/m} \frac{\Gamma(1+3/2m) \;  m^{1/m}}{\Gamma(1+1/2m)}. 
\end{align*}
This concludes the proof.
\end{proof}

We remark that the approach we followed is quite general and it has not much to do with 
specific properties of~\eqref{eqn:soldiv} except it being a scalar equation. For 
higher-dimensional cases, we would have to use approximation and/or reduction 
methods to understand stationary solutions of the Fokker-Planck equation~\cite{Risken}.\medskip

\begin{figure}
\centering
\begin{overpic}[width=0.6\linewidth]{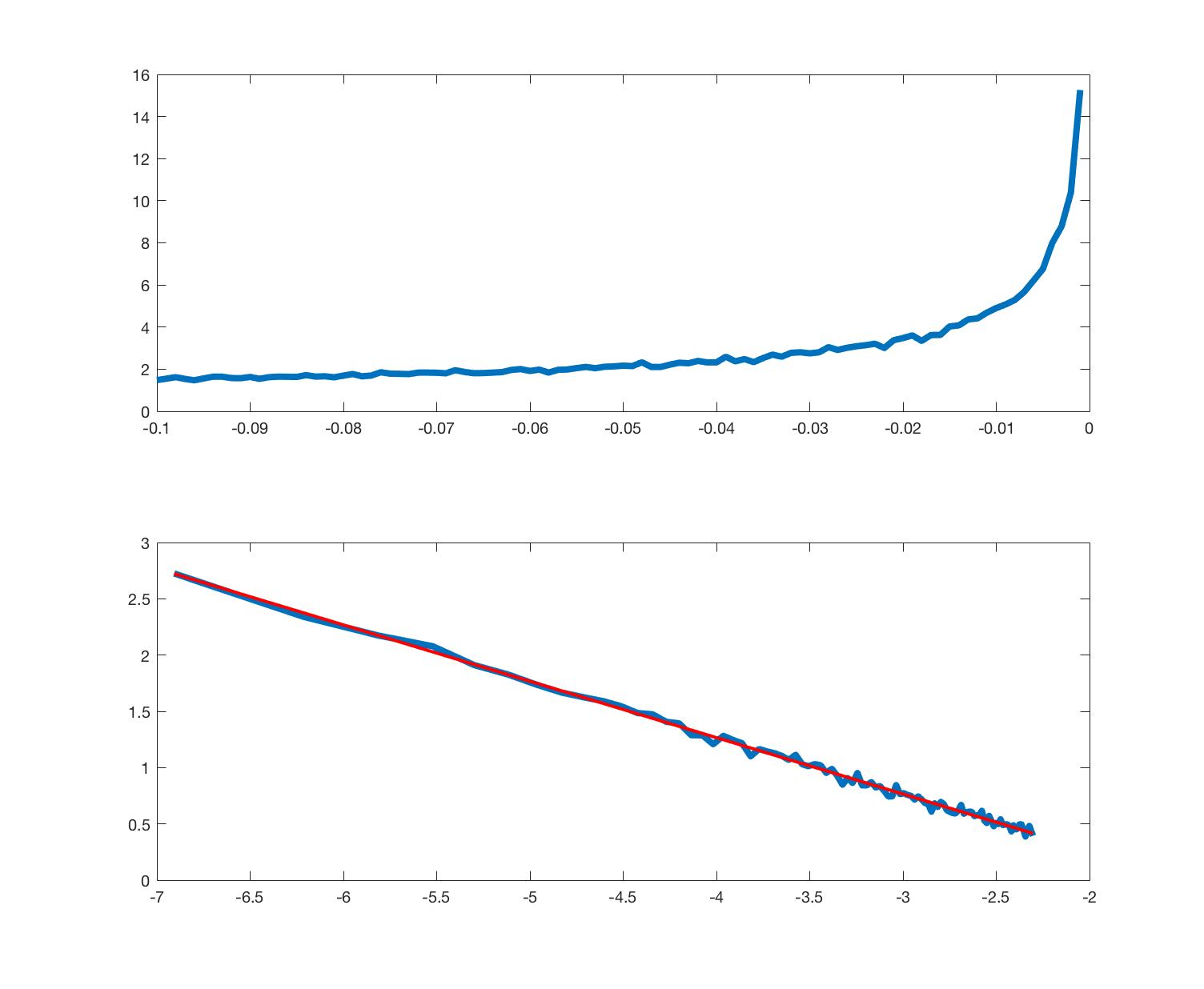}
\put(48,77){k=3}
\put(43,38){m=-0.50085}
\put(8,60){\makebox(0,0){\rotatebox{90}{$V_\infty$}}}
\put(49,44){$p$}
\put(8,23){\makebox(0,0){\rotatebox{90}{$\log(V_\infty)$}}}
\put(47,4){$\log(-p)$}
\end{overpic}
\caption{\label{fig:cov3} For $p \in [-0.1,0]$ we solve the equation $\txtd u = p u^3~\txtd t 
+ \txtd W_t$ on the interval $[0, 100]$ with initial condition $u_0=1$ using the 
Euler-Maruyama approximation method with $N=1000$ time steps. The plot above shows 
the numerically approximated variance $V_\infty$ from $1000$ sample paths and $t^*=100$. 
The loglog plot below (blue) shows clearly the (inverse) polynomial dependence. In red 
we plot the linear interpolation.}
\end{figure}

To cross-validate the theoretical result, we plot in Figure~\ref{fig:cov3} a numerical 
approximation of the asymptotic variance for $k=3$ obtained in the following way: 
\begin{enumerate}
\item[(N1)] first, we consider a sequence $p_i$ converging to  the bifurcation 
in $p=0$ as $i \rightarrow \infty$;
\item[(N2)] then, we choose (for each $p_i$) a large enough value $t=t^*$ so 
that the variance can be assumed to be close to the asymptotic limit;
\item[(N3)] finally, using Euler-Maruyama method~\cite{Higham} we simulate a 
large enough number of sample paths to the 
SDE~\eqref{eqn:nonlinsde} so that the empirical variance can be accurately 
computed.
\end{enumerate}

We remark here that the values of $t^*$ and the number of sample paths have been 
chosen empirically via 
numerical simulations to ensure the required conditions to be satisfied. In 
Figure~\ref{fig:cov3} we also show a loglog plot to highlight more clearly the relation 
of the form $$V_\infty = \cO \Big (\frac{1}{p^m} \Big ).$$ The following table shows the 
results for odd values of $k$ between 3 and 11: 

\begin{center}
\begin{tabular}{ |c| c| c| }
\hline
$k$ & $m$ & $2/(k+1)$ \\
\hline
3 & -0.50085 & -0.5 \\
5 & -0.33714 & -0.3333  \\
7 & -0.24961 & -0.25 \\
9 & -0.20656 & -0.2 \\
11 & -0.17154 & -0.1667 \\
\hline
\end{tabular}
\end{center}

As one can see by comparing the second and third column, the numerical results are really 
close to the analytical analysis, so it is also possible to observe the scaling in direct
practical simulations and/or data.

\section{Statistics for early-warning signs}
\label{sec:numerics}

We continue to study~\eqref{eqn:nonlinsde} and want to determine, how well 
statistical classifiers based on our previous findings can be used to predict 
tipping points~\cite{BoettingerHastings,HallerbergKantz,KuehnZschalerGross,
ZhangHallerbergKuehn}. Returning to our model class~\eqref{eq:fs1} we include 
in~\eqref{eqn:nonlinsde} the evolution of the parameter and study
\begin{equation}
\begin{cases}
\txtd U= \frac{p}{\varepsilon}U^k~\txtd s 
+ \frac{\sigma}{\sqrt{\varepsilon}}~\txtd W, \\
\txtd p=\txtd s,
\end{cases}
\end{equation}
where $s=\varepsilon t$, and we can also view the system as a single 
non-autonomous SODE
\begin{equation}
\label{eqn:fastslowunique}
\txtd U(p)= \frac{p}{\varepsilon}U^k(p)~\txtd p + 
\frac{\sigma}{\sqrt{\varepsilon}}~\txtd W(p).
\end{equation}
$\cC_0^{\txta}:=\{U=0, \; p <0 \}$ contains attracting steady states: if the parameter is 
initially negative, the evolution converges to $\cC_0^{\txta}$ (fast dynamics) and then 
remains close to it (slow dynamics) until tipping happens. Simulations 
for different choices of the initial conditions $(p_0, U_0)$, the 
nonlinearity parameters $k$, and the parameters $\sigma$ and $\varepsilon$ are
shown in Figure~\ref{fig:ff3}.  

\begin{figure}
\begin{minipage}{0.5\linewidth}
\centering
\begin{overpic}[width=\textwidth]{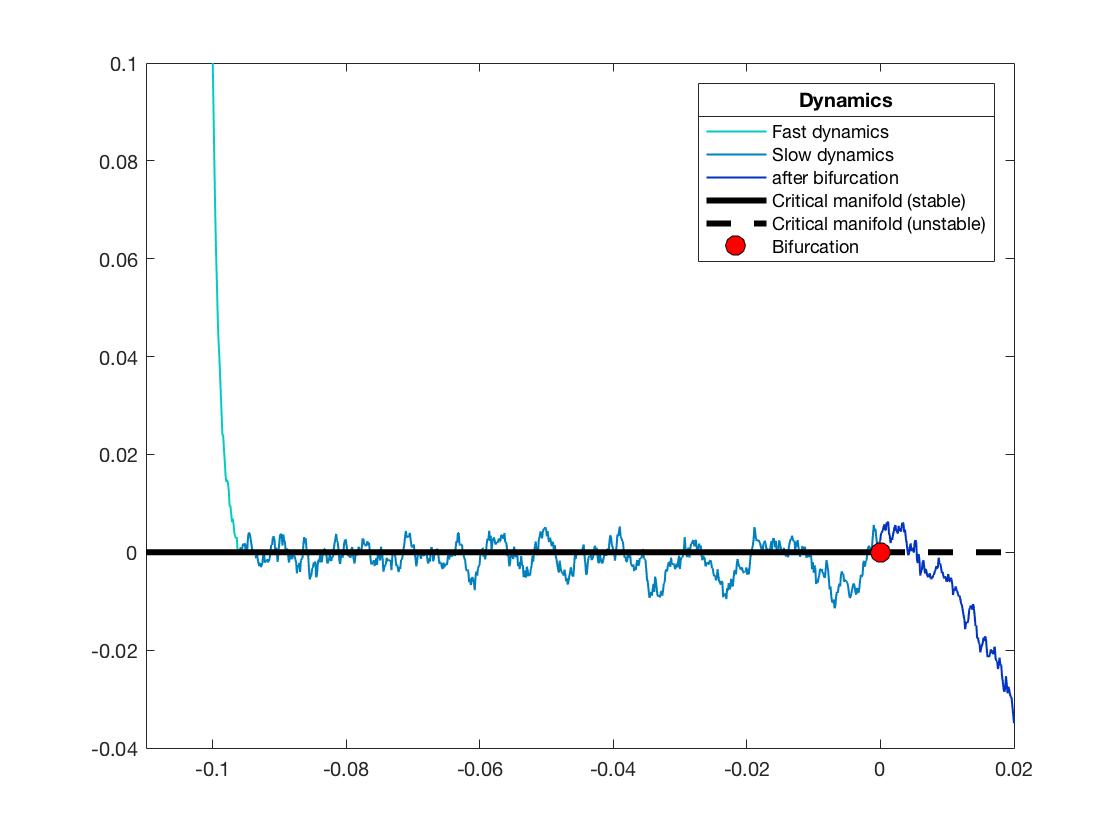}
\put(6,38){\makebox(0,0){\rotatebox{90}{Fast variable $U$}}}
\put(50,3){\makebox(0,0){\rotatebox{0}{Slow variable $p$}}}
\put(15,20){$k=1$}
\put(25,35){(a)}
\put(15,15){$\sigma=0.001$}
\put(15,10){$\varepsilon=0.0001$}
\end{overpic}
\end{minipage}
\hspace{0.2 cm}
\begin{minipage}{0.5\linewidth}
\centering
\begin{overpic}[width=\textwidth]{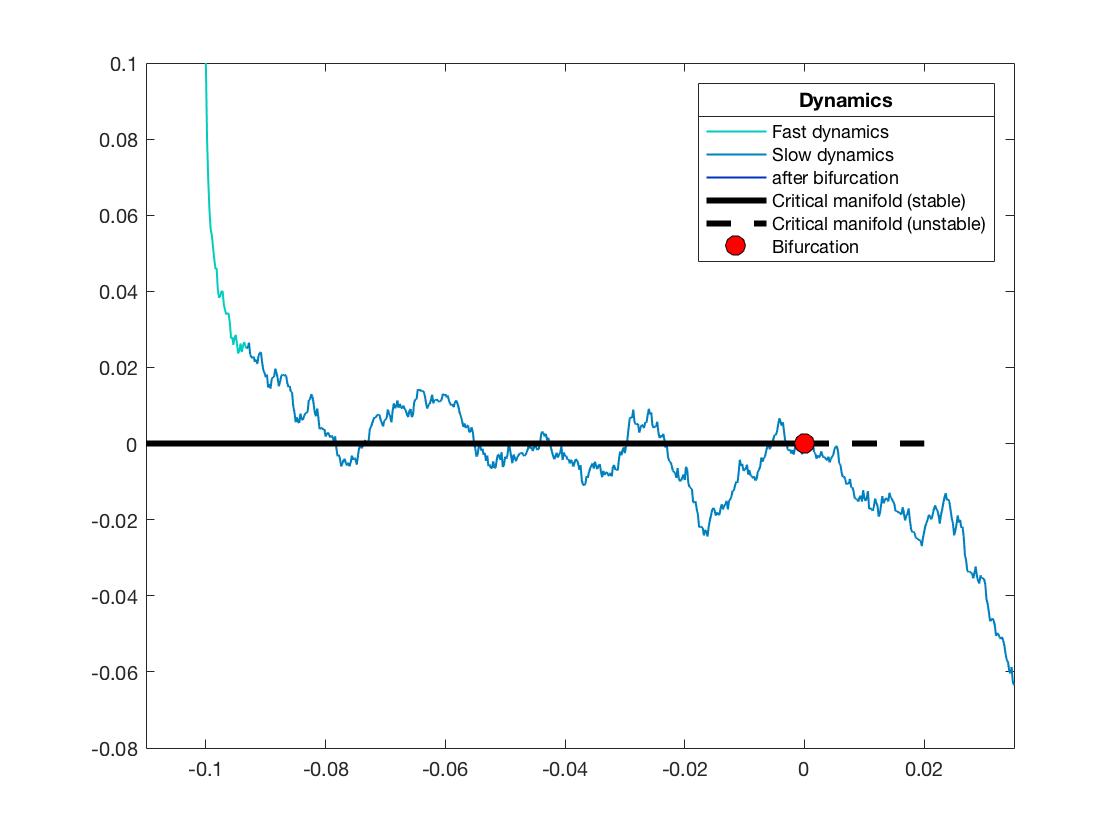}
\put(6,38){\makebox(0,0){\rotatebox{90}{Fast variable $U$}}}
\put(50,3){\makebox(0,0){\rotatebox{0}{Slow variable $p$}}}
\put(15,20){$k=3$}
\put(25,35){(a)}
\put(15,15){$\sigma=0.0001$}
\put(15,10){$\varepsilon=0.000001$}
\end{overpic}
\end{minipage}
\caption{\label{fig:ff3} (a) $k=1$, $(p_0,U_0)=(-0.1, 0.1)$, $\sigma=0.001$, 
$\varepsilon=0.0001$. (b) $k=3$, $(p_0,U_0)=(-0.1, 0.1)$, $\sigma=0.0001$, 
$\varepsilon=0.000001$.}
\end{figure}

Having defined the test model, we specify the setting, in which our predictions 
happen and the object we want to predict. Consider $U(s)$ and assume we have a 
time series of $w$ observations acquired at evenly spaced time intervals of 
length $\Delta s$, starting from time $s_{n-w+1}$ to time $s_n$. At time $s_n$ 
we want to predict, whether a bifurcation happens at a future time in the 
interval $[s_{n+\kappa- \delta}, s_{n+\kappa+ \delta}]$. We call $\kappa$ 
the lead time of our prediction, $\delta$ the uncertainty and $w$ the sliding 
window width. Given a single time series (or ``realization'') over the time 
window $[s_{n-w+1}, s_n]$, we approximate the variance via a sliding window 
estimate
$$v_n=\frac{1}{w} \sum_{i=n-w+1}^n (U(s_i)-\bar{U}(s_n))^2, 
\textrm{ where } \bar{U}(s_n) = \frac{1}{w} \sum_{i=n-w+1}^n U(s_i).$$
Qualitatively, the reason for using the sliding variance is that, if 
$\varepsilon$ is small enough, we can assume the parameter $p$ to be approximately 
constant in the sliding window. 

\begin{figure}
\centering
\begin{overpic}[width=0.8\textwidth]{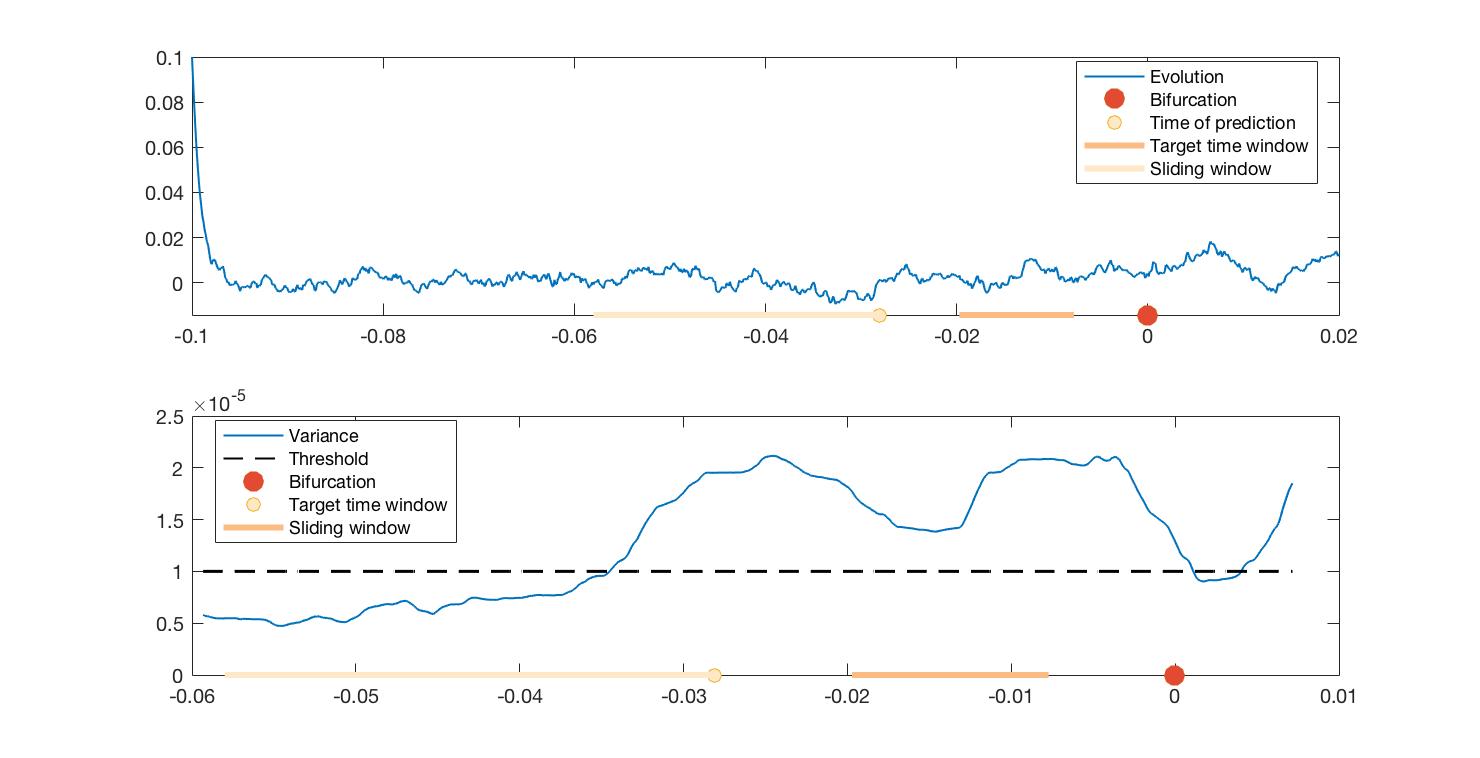}
\put(7,38){\makebox(0,0){\rotatebox{90}{Fast variable $U$}}}
\put(50,26){\makebox(0,0){\rotatebox{0}{Slow variable $p$}}}
\put(7,12){\makebox(0,0){\rotatebox{90}{Sliding variance}}}
\put(50,1){\makebox(0,0){\rotatebox{0}{Slow variable $p$}}}
\end{overpic}
\caption{\label{fig:predFP} An example of false positive prediction: the 
bifurcation happens outside the target interval and therefore the correct 
value of our estimator is $0$ (i.e.~no bifurcation); the plot below shows 
the values of the sliding variance and the sliding variance is above the 
threshold so our estimator wrongly classifies this as a positive prediction.}
\end{figure}

Now we define a family of binary estimators 
as follows: we raise an alarm for a tipping when the value of the variance 
goes above the threshold $d$. We define the indicator function for the alarm volume
as follows 
\begin{equation}
A_n(v_n, d) =
\begin{cases}
1~ \textrm{ if } v_n \geq d, \\
0~ \textrm{ otherwise.}
\end{cases}
\end{equation}
To clarify the prediction procedure we show two examples in 
Figure~\ref{fig:predFP} and Figure~\ref{fig:predTP}. In each example both 
the evolution of the fast-slow system and the sliding variance are shown. 
In particular, in the plot of the sliding variance we have highlighted the 
threshold level (black dashed line), the sliding window used to compute the 
variance (yellow), the interval $[s_{n+\kappa- \delta}, s_{n+\kappa+ \delta}]$ 
(orange) and the bifurcation point (red). The time at which the prediction is 
performed is marked with a yellow dot.  

\begin{figure}
\centering
\begin{overpic}[width=0.8\textwidth]{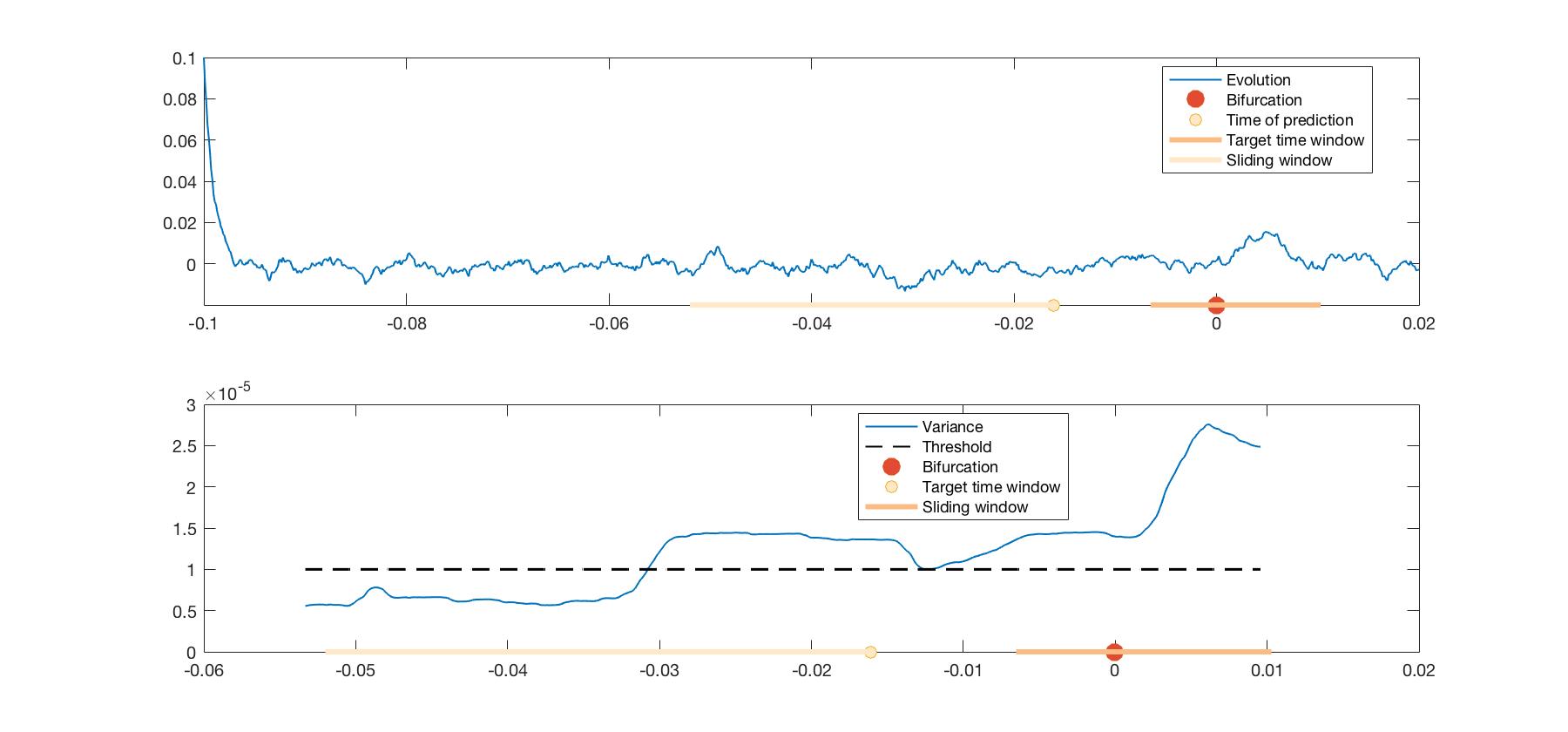}
\put(7,35){\makebox(0,0){\rotatebox{90}{Fast variable $U$}}}
\put(50,23){\makebox(0,0){\rotatebox{0}{Slow variable $p$}}}
\put(7,12){\makebox(0,0){\rotatebox{90}{Sliding variance}}}
\put(50,1){\makebox(0,0){\rotatebox{0}{Slow variable $p$}}}
\end{overpic}
\caption{\label{fig:predTP} An example of true positive prediction: the 
bifurcation happens inside the target interval and therefore the correct 
value of our estimator is $1$ (i.e.~bifurcation); the plot below shows that 
the sliding variance is above the threshold. Hence, our estimator correctly 
classifies this as a positive prediction.}
\end{figure}

Now one defines \textit{true} and \textit{false positive rates} as
$$TPR(M)=\frac{\# \{ \textrm{x: x is true positive} \} }{\# 
\{ \textrm{x: x is positive} \}} \textrm{ and } FPR(M)=\frac{\# 
\{ \textrm{x: x is false positive} \} }{\# \{ \textrm{x: x is 
negative} \}}.$$ 
A standard way to represent the efficiency of the classifier is then 
to plot the graph FPR vs TPR. The space having FPR on the $x$-axis 
and TPR on the $y$-axis is called \textit{ROC space}. In ROC space 
a good classifier is very close to the point $(0,1)$, which represents 
the perfect classifier. Note also that the diagonal (i.e.~the line $FPR=TPR$) 
in the ROC space represents random guesses. Therefore, an obvious minimal 
requirement for the efficiency of a classifier is being represented 
above this line. Since our estimator depends on the threshold $d$, 
it will be represented as a curve in the ROC space, known as ROC curve.

\begin{figure}
\centering
\begin{overpic}[width=0.6\textwidth]{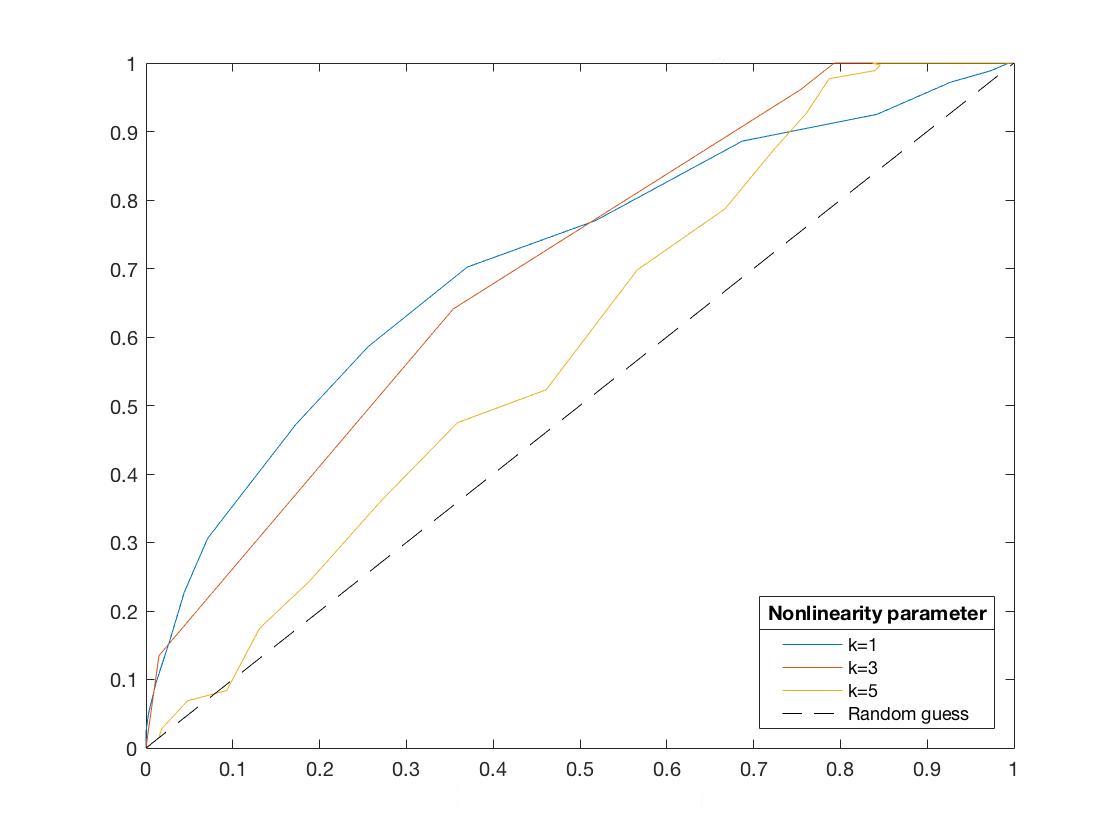}
\put(7,40){\makebox(0,0){\rotatebox{90}{True positive rate (TPR)}}}
\put(50,3){\makebox(0,0){\rotatebox{0}{False positive rate (FPR)}}}
\end{overpic}
\caption{\label{fig:finalgeneral} ROC curves for different 
values of the parameter $k$. Because the ROC curves are above 
the diagonal (the digonal corresponds to random guesses), the estimator 
we propose is statistically relevant.}
\end{figure}

We observe that the estimator we defined is statistically relevant for 
different values of the nonlinearity. In Figure~\ref{fig:finalgeneral} 
ROC curves are plotted for $k=1$, $k=3$ and $k=5$ and parameter values 
\begin{itemize}
 \item $[k, \kappa, \delta, w, \sigma, \varepsilon, N]=
[1,10,0,10^2,10^{-3},10^{-4}, 10^3]$, 
 \item $[k, \kappa, \delta, w, \sigma, \varepsilon, N]=
[3,10^3,500,3\cdot 10^3,10^{-5},10^{-8},10^5]$, 
 \item $[k, \kappa, \delta, w, \sigma, \varepsilon, N]=
[5,10^3,0,3\cdot 10^3,5\cdot10^{-6},2.5\cdot 10^{-9},10^4]$.
\end{itemize}
In the simulations we set the number of observations of a tipping even, i.e., the number of
sample paths, to $1000$. It is also interesting to fix a value of $k$ 
and study the efficiency 
of our estimator as a function of the sliding window $w$, the lead 
time $\kappa$ and the uncertainty $\delta$. Intuitively, we expect 
the efficiency of the estimator to be positively correlated to $w$ and 
$\delta$, but negatively correlated to $\kappa$. This reflects the fact 
that a higher availability of data, as well as the possibility to allow 
bigger uncertainty, improves our predictive ability. On the other hand, 
if we try to predict the bifurcation far in advance (i.e.~large lead 
time), we should obtain poorer results. 

\begin{figure}
\centering
\begin{overpic}[width=0.9\textwidth]{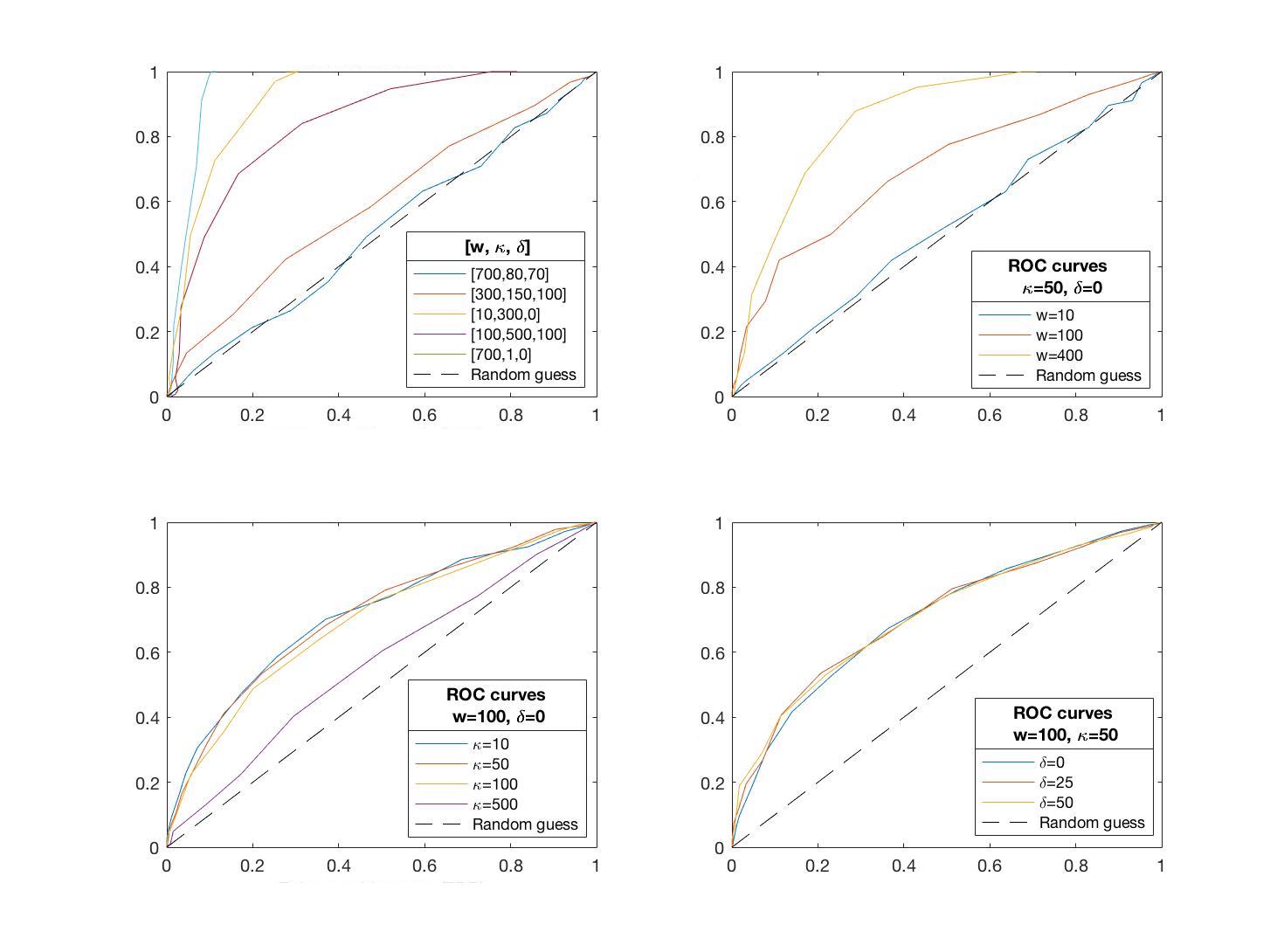}
\put(7,55){\makebox(0,0){\rotatebox{90}{True positive rate (TPR)}}}
\put(75,39){\makebox(0,0){\rotatebox{0}{False positive rate (FPR)}}}
\put(7,20){\makebox(0,0){\rotatebox{90}{True positive rate (TPR)}}}
\put(75,3){\makebox(0,0){\rotatebox{0}{False positive rate (FPR)}}}
\put(51,55){\makebox(0,0){\rotatebox{90}{True positive rate (TPR)}}}
\put(30,39){\makebox(0,0){\rotatebox{0}{False positive rate (FPR)}}}
\put(51,20){\makebox(0,0){\rotatebox{90}{True positive rate (TPR)}}}
\put(30,3){\makebox(0,0){\rotatebox{0}{False positive rate (FPR)}}}
\end{overpic}
\caption{\label{fig:roctre} The four plots show different ROC curves 
in the nonlinear case $k=3$. In the first plot ROC curves for different 
values of the parameters $w, \kappa, \delta$ are shown. The other plots 
show the comparison between different values of each parameter while the 
others remain fixed. These results have been obtained for $(U_0, p_0)
=(0.1;-0.1), \sigma=10^{-5}$ and $\varepsilon=10^{-8}$ averaging over 
$1000$ sample paths obtained using Euler-Maruyama method with a grid 
of $10000$ points.}
\end{figure}

Figure~\ref{fig:roctre} shows the case $k=3$. Our expectations are 
confirmed by the data. A wider sliding window and a smaller lead time 
give better predictions. Surprisingly, the uncertainty $\delta$ seems 
to have no major impact on the results for the interval of 
values that we tested here. This hints at the conjecture that lead time
and sliding window width are the major limiting factors for the parameter
configurations we tested here.\medskip

\textbf{Acknowledgements:} CK would like to thank the VolkswagenStiftung for support
via a Lichtenberg Professorship Grant.

\end{document}